\providecommand{\U}[1]{\protect\rule{.1in}{.1in}}
\newtheorem{theorem}{Theorem}
\newtheorem{corollary}[theorem]{Corollary}
\newtheorem{lemma}[theorem]{Lemma}
\newtheorem{remark}[theorem]{Remark}
\newenvironment{proof}[1][Proof]{\noindent\textbf{#1.} }{\ \rule{0.5em}{0.5em}}
\begin{document}

\title{\textbf{Blowup for }$C^{2}$\textbf{ Solutions of the }$N$\textbf{-dimensional
Euler-Poisson Equations in Newtonian Cosmology}}
\author{M\textsc{anwai Yuen\thanks{E-mail address: nevetsyuen@hotmail.com }}\\\textit{Department of Mathematics and Information Technology,}\\\textit{The Hong Kong Institute of Education,}\\\textit{10 Po Ling Road, Tai Po, New Territories, Hong Kong}}
\date{Revised 08-Nov-2013}
\maketitle

\begin{abstract}
Pressureless Euler-Poisson equations with attractive forces are standard
models in Newtonian cosmology. In this article, we further develop the
spectral dynamics method and apply a novel spectral-dynamics-integration
method to study the blowup conditions for $C^{2}$ solutions with a bounded
domain, $\left\Vert X(t)\right\Vert \leq X_{0}$, where $\left\Vert
\cdot\right\Vert $ denotes the volume and $X_{0}$ is a positive constant. In
particular, we show that if the cosmological constant $\Lambda<M/X_{0}$, with
the total mass $M$, then the non-trivial $C^{2}$ solutions in $R^{N}$ with the
initial condition $\Omega_{0ij}(x)=\frac{1}{2}\left[  \partial_{i}%
u^{j}(0,x)-\partial_{j}u^{i}(0,x)\right]  =0$ blow up at a finite time.

\ 

MSC: 35B30, 35B44, 35Q35, 85A05, 85A40

\ 

Key Words: Euler-Poisson Equations, Newtonian Cosmology, Initial Value
Problem, Blowup, Spectral-Dynamics-Integration Method, Attractive Forces,
$C^{2}$ Solutions, Bounded Domain, $R^{N}$

\end{abstract}

\section{Introduction}

The evolution of Newtonian cosmology can be modelled by the compressible
pressureless Euler-Poisson equations in dimensionless units:
\begin{equation}
\left\{
\begin{array}
[c]{rl}%
{\normalsize \rho}_{t}{\normalsize +\nabla\cdot(\rho u)} & {\normalsize =}%
{\normalsize 0}\\
\rho\lbrack u_{t}+(u\cdot\nabla)u] & {\normalsize =-}{\normalsize \rho
\nabla\Phi}\\
{\normalsize \Delta\Phi(t,x)} & {\normalsize =}{\normalsize \rho-\Lambda,}%
\end{array}
\right.  \label{Euler-Poissonnew}%
\end{equation}
where $\rho=\rho(t,x)\geq0$ and $u=u(t,x)\in\mathbf{R}^{N}$ are the density
and the velocity, respectively, with a background or cosmological constant
$\Lambda$.

The pressureless Euler-Poisson equations are the standard model in cosmology
\cite{FT}. If the Euler-Poisson equations include the pressure term, then they
provide the classical description of galaxies or gaseous stars in astrophysics
\cite{BT} and \cite{longair}. For details of the connection between the
Euler-Poisson equations (\ref{Euler-Poissonnew}) and Einstein's field
equations%
\begin{equation}
R_{\mu\nu}-\frac{1}{2}g_{\mu\nu}R-g_{\mu\nu}\Lambda=\frac{8\pi G}{c^{4}}%
T_{\mu\nu},
\end{equation}
where $R_{\mu\upsilon}$ is the Ricci curvature tensor, $R$ is the curvature
scalar, $g_{\mu\nu\text{ }}$ is the metric tensor, $T_{\mu\nu}$ is the
energy-momentum tensor of the universe, $G$ is Newton's gravitational constant
and $c$ is the light speed, interested readers can refer to Chapter 6 of
Longair's book \cite{longair}. In addition, for a geometrical explanation of
Newtonian cosmology with a cosmological constant ${\normalsize \Lambda}$,
interested readers can refer to Brauer, Rendall and Reula's paper
\cite{Brauer1994}.

For an analysis of stabilities for the related systems, interested readers can
refer to \cite{E}, \cite{M2}, \cite{MP}, \cite{BrauerJMP1998}, \cite{ELT},
\cite{DLY}, \cite{LT}, \cite{Y2}, \cite{CHAET}, \cite{CHENGT}, \cite{YuenNA}
and \cite{YuenBlowupEPCaAttractive}. In addition, there are explicit blowup or
global (periodical) solutions for the Euler-Poisson systems \cite{GW},
\cite{M1}, \cite{DXY}, \cite{Y1} and \cite{YUENCQG2009}.

It should be noted that in 2008, Chae and Tadmor \cite{CHAET} determined the
finite time blowup for the pressureless Euler-Poisson equations with
attractive forces (\ref{Euler-Poissonnew}) with $\Lambda=0$, under the initial
condition,%
\begin{equation}
S:=\{\left.  x_{0}\in R^{N}\right\vert \text{ }\rho_{0}(x_{0})>0,\text{
}\Omega_{0}(x_{0})=0,\text{ }\operatorname{div}u(0,x_{0})<0\}\neq\phi,
\end{equation}
where $u=(u^{1},u^{2},....,u^{N})$ and $\Omega_{0}(x_{0})$ is the re-scaled
vorticity matrix defined by $\Omega_{0ij}(x_{0})=\frac{1}{2}\left[
\partial_{i}u^{j}(0,x_{0})-\partial_{j}u^{i}(0,x_{0})\right]  $. Using
spectral dynamics analysis, they identified the Riccati differential
inequality%
\begin{equation}
\frac{D\operatorname{div}u(t,x_{0}(t))}{Dt}\leq-\frac{1}{N}\left[
\operatorname{div}u(t,x_{0}(t))\right]  ^{2}, \label{ineq1111new}%
\end{equation}
along the characteristic curve $\frac{dx_{0}(t)}{dt}=u(t,x_{0}(t))$. The
corresponding solution of the inequality (\ref{ineq1111new}) blows up at or
before $T=-N/\operatorname{div}u(0,x_{0}(0))$ with an initial condition that
requires that $\operatorname{div}u(0,x_{0}(0))$ at some non-vacuum state. An
improved blowup condition for the Euler-Poisson equations (\ref{ineq1111new})
was obtained by Cheng and Tadmor \cite{CHENGT} in 2009.

In this article, we modify the spectral dynamics method to introduce a
\textit{spectral-dynamics-integration} method for a bounded domain $X(t)$, to
obtain the new blowup conditions according to the following theorem.

\begin{theorem}
\label{thm:1}For the $N$-dimensional Euler-Poisson equations
(\ref{Euler-Poissonnew}), consider $C^{2}$ solutions with a bounded domain:
$\left\Vert X(t)\right\Vert \leq V_{\sup}$, where $\left\Vert \cdot\right\Vert
$ denotes the volume and $V_{\sup}$ is a positive constant. We define the
weighted functional%
\begin{equation}
H(t)=\int_{X(t)}\operatorname{div}ud\mu_{t}, \label{WeightedFucntion}%
\end{equation}
with the positive measure $d\mu_{t}=\rho(t,x(t))dx(t)$. If the initial
condition
\begin{equation}
\Omega_{0ij}(x)=\frac{1}{2}\left[  \partial_{i}u^{j}(0,x)-\partial_{j}u_{{}%
}^{i}(0,x)\right]  =0
\end{equation}
and any one of the following conditions\newline(1) $\Lambda<M/V_{\sup}%
$,\newline(2)\ $\Lambda\geq M/V_{\sup}$ and $H(0)<-\sqrt{-\frac{M^{3}%
N}{V_{\sup}}+\Lambda M^{2}N}$,\newline with the total mass $M=\int_{X(0)}%
\rho(0,x)dx>0$ are satisfied, the non-trivial $C^{2}$ solutions blow up at a
finite time $T.$
\end{theorem}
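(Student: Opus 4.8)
The plan is to reduce the PDE dynamics to a single autonomous Riccati differential inequality for the scalar functional $H(t)$, and then read off blowup from a phase-line analysis. I would begin with the local spectral dynamics. Writing $\nabla u$ for the velocity gradient matrix, $S$ for its symmetric part, and $\frac{D}{Dt}=\partial_{t}+u\cdot\nabla$ for the material derivative, I differentiate the momentum equation and take the trace, using the Poisson equation $\Delta\Phi=\rho-\Lambda$, to obtain the pointwise identity
\begin{equation}
\frac{D}{Dt}(\operatorname{div}u)=-\operatorname{tr}\left[(\nabla u)^{2}\right]+\Lambda-\rho
\end{equation}
along each characteristic curve $\dot{x}(t)=u(t,x(t))$. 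The decisive structural input is that the antisymmetric part of $\nabla u$, namely the vorticity matrix $\Omega$, obeys a linear homogeneous transport equation $\frac{D\Omega}{Dt}=-(S\Omega+\Omega S)$, because the Hessian $\nabla^{2}\Phi$ is symmetric and drops out of the antisymmetrization. Hence the hypothesis $\Omega_{0}=0$ propagates to $\Omega(t)\equiv0$ for as long as the $C^{2}$ solution exists, so $\nabla u$ stays symmetric and $\operatorname{tr}[(\nabla u)^{2}]=\left\vert \nabla u\right\vert ^{2}\ge\frac{1}{N}(\operatorname{div}u)^{2}$ by the Cauchy--Schwarz bound on its eigenvalues. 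This gives the pointwise Riccati inequality $\frac{D}{Dt}(\operatorname{div}u)\le-\frac{1}{N}(\operatorname{div}u)^{2}+\Lambda-\rho$.

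Next I would integrate this against the transported mass measure. Since $d\mu_{t}=\rho\,dx$ and the continuity equation yields the Reynolds transport identity $\frac{d}{dt}\int_{X(t)}g\,\rho\,dx=\int_{X(t)}\rho\frac{Dg}{Dt}\,dx$ for material domains, differentiating $H$ and inserting the pointwise inequality produces
\begin{equation}
H^{\prime}(t)\le-\frac{1}{N}\int_{X(t)}\rho(\operatorname{div}u)^{2}\,dx+\Lambda\int_{X(t)}\rho\,dx-\int_{X(t)}\rho^{2}\,dx.
\end{equation}
The total mass $M=\int_{X(t)}\rho\,dx$ is conserved, and two applications of Cauchy--Schwarz close the estimate: $H^{2}\le M\int_{X(t)}\rho(\operatorname{div}u)^{2}\,dx$ controls the quadratic term from below, while $M^{2}\le\left\Vert X(t)\right\Vert \int_{X(t)}\rho^{2}\,dx\le V_{\sup}\int_{X(t)}\rho^{2}\,dx$ bounds the last term via $\left\Vert X(t)\right\Vert \le V_{\sup}$. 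Combining these gives the closed scalar inequality
\begin{equation}
H^{\prime}(t)\le-\frac{1}{NM}H(t)^{2}+M\left(\Lambda-\frac{M}{V_{\sup}}\right).
\end{equation}

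Finally I would analyze this Riccati inequality by its phase line, writing the constant term as $C=M(\Lambda-M/V_{\sup})$. In case (1), $\Lambda<M/V_{\sup}$ gives $C<0$, so the right-hand side is strictly negative; $H$ is driven below zero in finite time, after which $H^{\prime}\le-\frac{1}{NM}H^{2}$ forces $H\to-\infty$ in finite time. In case (2), $C\ge0$ and the unstable equilibrium sits at $-\sqrt{NMC}=-\sqrt{\Lambda M^{2}N-M^{3}N/V_{\sup}}$, so the assumption $H(0)<-\sqrt{-M^{3}N/V_{\sup}+\Lambda M^{2}N}$ places $H(0)$ strictly below it, whence $H$ again decreases monotonically to $-\infty$ in finite time by comparison with the exact Riccati ODE. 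Because $M$ is finite while $H\to-\infty$, the density-weighted integral of $\left\vert \operatorname{div}u\right\vert $ must diverge on a finite-volume domain, which is incompatible with a globally smooth $C^{2}$ solution, so breakdown occurs at some finite $T$. The main obstacle I anticipate lies in rigorously justifying the two conceptual pillars---that $\Omega\equiv0$ is genuinely preserved (the homogeneity of the antisymmetric evolution) and that the Reynolds transport differentiation of $H$ is legitimate for $C^{2}$ solutions on the moving support $X(t)$---since the subsequent Cauchy--Schwarz steps and the Riccati comparison are routine.
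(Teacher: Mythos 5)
Your proposal is correct and follows essentially the same route as the paper: the pointwise Riccati inequality from spectral dynamics under $\Omega_{0}=0$, integration against $d\mu_{t}=\rho\,dx$ via the Reynolds transport lemma, the two Cauchy--Schwarz estimates yielding $H^{\prime}\le-\frac{H^{2}}{MN}-\frac{M^{2}}{V_{\sup}}+\Lambda M$, and the two-case Riccati comparison. The only difference is that you spell out the vorticity transport argument that the paper simply cites from Chae--Tadmor and Cheng--Tadmor.
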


Here, the functional (\ref{WeightedFucntion}) represents the aggregate
density-weighted divergence of the velocity $u(t,x)$.

\section{New Spectral-Dynamics-Integration Method}

Before we present the novel spectral-dynamics-integration method, we first
quote the following lemma:

\begin{lemma}
[Proposition 2.2 on page 27 of \cite{TM}]\label{LemmaTM}Let $S$ be a material
system that fills the domain $X(t)$ at time $t$, and let $C$ be a function of
class $C^{1}$ in $t$ and $x.$ Then,%
\begin{equation}
\frac{d}{dt}\int_{X(t)}C(t,x)\rho(t,x)dx=\int_{X(t)}\frac{DC(t,x)}{Dt}%
\rho(t,x)dx,
\end{equation}
where $(D/Dt)=(\partial/\partial t)+u\cdot\bigtriangledown$ is the convective derivative.
\end{lemma}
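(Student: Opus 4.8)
The plan is to recognize this identity as the Reynolds transport theorem specialized to a mass-carrying material domain, combined with the mass-conservation law. The strategy is to pull the moving integral back to a fixed reference configuration via the flow map, differentiate under the now time-independent integral sign, and then invoke the continuity equation (the first line of (\ref{Euler-Poissonnew})) to collapse the density and divergence contributions.

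First I would introduce the characteristic flow $x(t)=\phi_{t}(a)$ solving $\dot{x}(t)=u(t,x(t))$ with $x(0)=a\in X(0)$, so that $X(t)=\phi_{t}(X(0))$. Writing the Jacobian $J(t,a)=\det(\partial x(t)/\partial a)$, the change of variables gives
\[
\int_{X(t)}C(t,x)\rho(t,x)\,dx=\int_{X(0)}C(t,x(t))\,\rho(t,x(t))\,J(t,a)\,da.
\]
The key structural fact I would establish is Euler's expansion formula $\partial_{t}J=(\operatorname{div}u)\,J$, which follows by differentiating the determinant and using $\partial_{t}(\partial x/\partial a)=(\nabla u)(\partial x/\partial a)$. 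Because the reference domain $X(0)$ is now fixed and the integrand is $C^{1}$ (the solution being $C^{2}$ and $C$ being $C^{1}$), differentiation under the integral sign is legitimate.

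Next I would differentiate the right-hand side. Along the flow the total time derivative is exactly the convective derivative, $\frac{d}{dt}C(t,x(t))=\frac{DC}{Dt}$ and likewise for $\rho$, so the product rule together with Euler's formula yields
\[
\frac{d}{dt}\int_{X(0)}C\rho J\,da=\int_{X(0)}\left[\frac{DC}{Dt}\rho J+C\frac{D\rho}{Dt}J+C\rho\,(\operatorname{div}u)\,J\right]da.
\]
The decisive step is then the continuity equation: rewriting $\rho_{t}+\nabla\cdot(\rho u)=0$ in material form gives $\frac{D\rho}{Dt}+\rho\operatorname{div}u=0$, so the middle and last terms cancel identically. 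Undoing the change of variables returns the claimed identity. (Alternatively, one can invoke the flux form $\frac{d}{dt}\int_{X(t)}f\,dx=\int_{X(t)}[\partial_{t}f+\nabla\cdot(fu)]\,dx$ with $f=C\rho$ and expand, arriving at the same cancellation; the flow-map route above is more self-contained since it also derives that transport formula.)

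The main obstacle is not the algebra but the regularity justification underlying it: one must verify that the flow map $\phi_{t}$ is a well-defined $C^{1}$ diffeomorphism with $J>0$ on the relevant time interval, so that the pullback and change of variables are valid and orientation-preserving, and one must carefully establish Euler's expansion formula. These points are exactly where the $C^{2}$ hypothesis on the solution enters; once they are secured, the continuity-equation cancellation is immediate.
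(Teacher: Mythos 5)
Your proof is correct, but you should know that the paper contains no proof of this statement at all: the lemma is imported verbatim as Proposition 2.2 on page 27 of Temam and Miranville \cite{TM} and is used as a black box in the proof of Theorem \ref{thm:1}. What you have written is essentially the standard argument from that reference (and from most continuum mechanics texts): pull the moving integral back to the reference configuration by the flow map, differentiate under the now-fixed integral using Euler's expansion formula $\partial_{t}J=(\operatorname{div}u)\,J$, and cancel the terms $C\left(\frac{D\rho}{Dt}+\rho\operatorname{div}u\right)J$ by the continuity equation; your alternative route through the flux form of the transport theorem is the same computation in different packaging. Two refinements are worth making. First, the positivity of $J$, which you flag as the main obstacle, is in fact immediate: $J$ solves the linear ODE $\partial_{t}J=(\operatorname{div}u)J$ with $J(0,a)=1$, so $J(t,a)=\exp\left(\int_{0}^{t}\operatorname{div}u(s,\phi_{s}(a))\,ds\right)>0$ for as long as the solution exists, and injectivity of $\phi_{t}$ follows from ODE uniqueness, so the change of variables is automatically valid and orientation-preserving. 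Second, the lemma does not actually need the $C^{2}$ hypothesis you invoke: $u\in C^{1}$ already gives a $C^{1}$ flow (local Lipschitz continuity yields uniqueness and differentiable dependence on initial data), and $C,\rho\in C^{1}$ suffices for differentiation under the integral; the paper's $C^{2}$ assumption on solutions is needed elsewhere, namely for taking the divergence of the momentum equation in the spectral dynamics step. With these observations, your argument is a complete, self-contained proof of the identity that the paper merely cites.
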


In the following proof, we modify the method of spectral dynamics described in
\cite{LT}, \cite{CHAET} and \cite{CHENGT} to obtain the different blowup
conditions for the $C^{2}$ solutions.

\begin{proof}
[Proof of Theorem \ref{thm:1}]As the mass equation (\ref{Euler-Poissonnew}%
)$_{1}$:%
\begin{equation}
\frac{D\rho}{Dt}+\rho\nabla\cdot u=0,
\end{equation}
with the convective derivative,%
\begin{equation}
\frac{D}{Dt}=\frac{\partial}{\partial t}+\left(  u\cdot\nabla\right)
\end{equation}
could be integrated as:%
\begin{equation}
\rho(t,x_{0})=\rho_{0}(x_{0}(0,x_{0}))\exp\left(  -\int_{0}^{t}\nabla\cdot
u(t,x_{0}(t;x_{0}))dt\right)  \geq0 \label{positiveness}%
\end{equation}
for $\rho_{0}(x_{0}(0,x_{0}))\geq0$, the density function $\rho(t,x(t;x))$
generally conserves its non-negative nature.

For the momentum equations (\ref{Euler-Poissonnew})$_{2}$ and the solutions
with non-vacuum, we have%
\begin{equation}
u_{t}+u\nabla\cdot u=-\nabla\Phi.
\end{equation}
We take the divergence to the above equation to obtain:%
\begin{equation}
\nabla\cdot\left(  u_{t}+u\nabla\cdot u\right)  =-\Delta\Phi.
\end{equation}
If the initial condition $\Omega_{0ij}(x)=\frac{1}{2}\left[  \partial_{i}%
u^{j}(0,x)-\partial_{j}u^{i}(0,x)\right]  =0$ is satisfied, we can show by the
standard spectral dynamics in \cite{CHAET} and \cite{CHENGT} (by directly
applying equation (2.6) in \cite{CHAET} or equation (4.1) in \cite{CHENGT})
that
\begin{equation}
\frac{D}{Dt}\operatorname{div}u(t,x(t))+\frac{1}{N}\left[  \operatorname{div}%
u(t,x(t))\right]  ^{2}\leq-\rho(t,x(t))+\Lambda.
\end{equation}

We notice that the advancement in this article for the new blowup conditions
begins here. First, we multiply the density function $\rho(t,x(t))$ on both
sides and take the integration over the domain $X(t)$ to obtain:%
\begin{equation}
\rho(t,x(t))\left(  \frac{D}{Dt}\operatorname{div}u(t,x(t))+\frac{1}{N}\left[
\operatorname{div}u(t,x(t))\right]  ^{2}\right)  \leq-\left[  \rho
(t,x(t))\right]  ^{2}+\Lambda\rho(t,x(t)).
\end{equation}%
\begin{equation}
\int_{X(t)}\rho\left(  \frac{D}{Dt}\operatorname{div}u\right)  dx+\frac{1}%
{N}\int_{X(t)}\rho\left(  \operatorname{div}u\right)  ^{2}dx\leq-\int
_{X(t)}\rho^{2}dx+\Lambda\int_{X(t)}\rho dx
\end{equation}%
\begin{equation}
\int_{X(t)}\rho\left(  \frac{D}{Dt}\operatorname{div}u\right)  dx+\frac{1}%
{N}\int_{X(t)}\rho\left(  \operatorname{div}u\right)  ^{2}dx\leq-\int
_{X(t)}\rho^{2}dx+\Lambda M,
\end{equation}
where $M=\int_{X(t)}\rho dx=\int_{X(0)}\rho(0,x(0))dx(0)>0$ for non-trivial
solutions is the total mass of the fluid.\newline We apply Lemma \ref{LemmaTM}
with $C(t,x):=\operatorname{div}u$ to obtain
\begin{equation}
\frac{d}{dt}\int_{X(t)}\rho\operatorname{div}udx+\frac{1}{N}\int_{X(t)}%
\rho\left(  \operatorname{div}u\right)  ^{2}dx\leq-\int_{X(t)}\rho
^{2}dx+\Lambda M.
\end{equation}
We define the weighted functional%
\begin{equation}
H:=H(t)=\int_{X(t)}\operatorname{div}ud\mu_{t}%
\end{equation}
with the positive measure $d\mu_{t}=\rho(t,x(t))dx(t)$ for $\rho(0,x)\geq0$,
with the equation (\ref{positiveness}) to obtain%
\begin{equation}
\frac{d}{dt}H\leq-\frac{1}{N}\int_{X(t)}\left(  \operatorname{div}u\right)
^{2}d\mu_{t}-\int_{X(t)}\rho^{2}dx+\Lambda M. \label{Yuenineq}%
\end{equation}
We can estimate the first term on the right-hand side of the inequality
(\ref{Yuenineq}) to obtain
\begin{equation}
\left(  \int_{X(t)}\operatorname{div}ud\mu_{t}\right)  ^{2}=\left(  \left\vert
\int_{X(t)}\operatorname{div}ud\mu_{t}\right\vert \right)  ^{2}\leq\left(
\int_{X(t)}\left\vert \operatorname{div}u\right\vert d\mu_{t}\right)  ^{2}\leq
M\int_{X(t)}\left(  \operatorname{div}u\right)  ^{2}d\mu_{t}.
\end{equation}
Using the Cauchy-Schwarz inequality, we get%
\begin{equation}
\int_{X(t)}\left\vert \operatorname{div}u\right\vert d\mu_{t}\leq\left(
\int_{X(t)}1^{2}d\mu_{t}\right)  ^{1/2}\left(  \int_{X(t)}\left(
\operatorname{div}u\right)  ^{2}d\mu_{t}\right)  ^{1/2}%
\end{equation}%
\begin{equation}
\int_{X(t)}\left\vert \operatorname{div}u\right\vert d\mu_{t}\leq\left(
\int_{X(t)}\rho dx\right)  ^{1/2}\left(  \int_{X(t)}\left(  \operatorname{div}%
u\right)  ^{2}d\mu_{t}\right)  ^{1/2}=\sqrt{M}\left(  \int_{X(t)}\left(
\operatorname{div}u\right)  ^{2}d\mu_{t}\right)  ^{1/2}. \label{ineqlabc2}%
\end{equation}
We obtain%
\begin{equation}
\frac{\left(  \int_{X(t)}\operatorname{div}ud\mu_{t}\right)  ^{2}}{M}\leq
\int_{X(t)}\left(  \operatorname{div}u\right)  ^{2}d\mu_{t}%
\end{equation}%
\begin{equation}
-\frac{1}{N}\int_{X(t)}\left(  \operatorname{div}u\right)  ^{2}d\mu_{t}%
\leq\frac{-1}{MN}\left(  \int_{X(t)}\operatorname{div}ud\mu_{t}\right)
^{2}=\frac{-H^{2}}{MN}.
\end{equation}
The second term on the right-hand side of the inequality (\ref{Yuenineq}) can
be determined by%
\begin{equation}
\int_{X(t)}\rho dx\leq\left(  \int_{X(t)}1^{2}dx\right)  ^{1/2}\left(
\int_{X(t)}\rho^{2}dx\right)  ^{1/2}%
\end{equation}%
\begin{equation}
M=\int_{X(t)}\rho dx\leq\left\Vert X(t)\right\Vert ^{1/2}\left(  \int
_{X(t)}\rho^{2}dx\right)  ^{1/2}\leq(V_{\sup})^{1/2}\left(  \int_{X(t)}%
\rho^{2}dx\right)  ^{1/2}%
\end{equation}%
\begin{equation}
M^{2}\leq V_{\sup}\int_{X(t)}\rho^{2}dx
\end{equation}%
\begin{equation}
-\int_{X(t)}\rho^{2}dx\leq\frac{-M^{2}}{V_{\sup}}%
\end{equation}
for a bounded domain $\left\Vert X(t)\right\Vert \leq V_{\sup}<+\infty
$.\newline Thus, the inequality (\ref{Yuenineq}) becomes%
\begin{equation}
\frac{d}{dt}H\leq-\frac{1}{N}\int_{X(t)}\left(  \operatorname{div}u\right)
^{2}d\mu_{t}-\int_{X(t)}\rho^{2}dx+\Lambda M\leq-\frac{H^{2}}{MN}-\frac{M^{2}%
}{V_{\sup}}+\Lambda M
\end{equation}%
\begin{equation}
\frac{d}{dt}H\leq-\frac{H^{2}}{MN}-\frac{M^{2}}{V_{\sup}}+\Lambda M.
\label{ineq11new}%
\end{equation}
(1) If $\Lambda<M/V_{\sup}$, the Riccati inequality (\ref{ineq11new}) can be
estimated by%
\begin{equation}
\frac{d}{dt}H\leq-\frac{M^{2}}{V_{\sup}}+\Lambda M<0.
\end{equation}
Thus, there exists a finite time $T_{0}$, such that%
\begin{equation}
H(T_{0})<0.
\end{equation}
By applying the comparison property, we obtain%
\begin{equation}
\left\{
\begin{array}
[c]{c}%
\frac{d}{dt}H\leq-\frac{H^{2}}{MN}-\frac{M^{2}}{V_{\sup}}+\Lambda M\\
H(T_{0})<0.
\end{array}
\right.  \label{ineq11new2}%
\end{equation}
It is well known that the Riccati inequality (\ref{ineq11new2}) blows up at a
finite time $T$.\newline(2) If $\Lambda\geq M/V_{\sup}$ and $H(0)<-\sqrt
{-\frac{M^{3}N}{V_{\sup}}+\Lambda M^{2}N}$, it is also clear that the solution
of the Riccati inequality (\ref{ineq11new}) blows up at a finite time
$T$.\newline The proof is completed.
\end{proof}

\begin{remark}
For the one dimensional case, the condition $\Omega_{0ij}(x)=0$ in Theorem 1
is automatically satisfied.
\end{remark}

The corollary below is immediately shown in Theorem 1.

\begin{corollary}
For $\Lambda=0$, the non-trivial $C^{2}$ solutions with the bounded domain
$\left\Vert X(t)\right\Vert \leq V_{\sup},$ of the Euler-Poisson equations
(\ref{Euler-Poissonnew}) in $R^{N}$, and the initial condition%
\begin{equation}
\Omega_{0ij}(x)=\frac{1}{2}\left[  \partial_{i}u^{j}(0,x)-\partial_{j}%
u^{i}(0,x)\right]  =0,
\end{equation}
blow up at a finite time $T$.
\end{corollary}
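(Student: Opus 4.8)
The plan is to derive the corollary as an immediate specialization of Theorem~\ref{thm:1} to the case $\Lambda=0$, so that essentially no new analysis is required beyond checking that the relevant hypothesis is automatically in force. First I would observe that for any non-trivial solution the total mass obeys $M=\int_{X(0)}\rho(0,x)\,dx>0$, while $V_{\sup}$ is a positive constant by assumption; hence $M/V_{\sup}>0$. Setting $\Lambda=0$ then gives at once $\Lambda=0<M/V_{\sup}$, which is exactly branch~(1) of the hypothesis of Theorem~\ref{thm:1}.

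Since the corollary already imposes the bounded-domain constraint $\left\Vert X(t)\right\Vert\leq V_{\sup}$ and the vorticity-type initial condition $\Omega_{0ij}(x)=0$, all the requirements of Theorem~\ref{thm:1} under branch~(1) are met, and I would invoke that theorem directly to conclude finite-time blowup of the non-trivial $C^{2}$ solution. For completeness I would also make the mechanism explicit: putting $\Lambda=0$ into the Riccati inequality (\ref{ineq11new}) leaves
\[
\frac{d}{dt}H\leq-\frac{H^{2}}{MN}-\frac{M^{2}}{V_{\sup}},
\]
whose right-hand side is bounded above by $-M^{2}/V_{\sup}<0$. Thus $H$ is strictly decreasing and must reach a negative value at some finite time $T_{0}$; comparing $H$ from $T_{0}$ onward with the solution of the autonomous equation $h'=-h^{2}/(MN)-M^{2}/V_{\sup}$, which escapes to $-\infty$ in finite time, forces $H$ to blow up at a finite $T$.

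There is essentially no genuine obstacle here, since the corollary is a pure restriction of the theorem just proved. The single point that warrants any attention --- the nearest thing to a hard step --- is the strict inequality $\Lambda=0<M/V_{\sup}$, which hinges on both the non-triviality assumption ($M>0$) and the positivity and finiteness of $V_{\sup}$; once these are noted, the conclusion follows with no further computation.
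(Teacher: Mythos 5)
Your proposal is correct and matches the paper exactly: the paper states that this corollary "is immediately shown in Theorem 1," i.e.\ one simply notes that $\Lambda=0<M/V_{\sup}$ (since $M>0$ and $V_{\sup}>0$) so that condition (1) of Theorem~\ref{thm:1} applies. Your additional restatement of the Riccati mechanism is harmless but not needed beyond invoking the theorem.
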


\begin{remark}
By further requiring the bounded domain $\left\Vert X(t)\right\Vert \leq
V_{\sup}$ and
\begin{equation}
\Omega_{0ij}(x)=\frac{1}{2}\left[  \partial_{i}u^{j}(0,x)-\partial_{j}u_{{}%
}^{i}(0,x)\right]  =0,
\end{equation}
for the Euler-Poisson equations (\ref{Euler-Poissonnew}), the main achievement
of this spectral-dynamics-integration method is to remove the restriction on
$\operatorname{div}u(0,x_{0})$ with some point $x_{0}$, for the positive
background constant $\Lambda<M/V_{\sup}$ in Cheng and Tadmor's paper
\cite{CHENGT} for obtaining the blowup phenomenon.
\end{remark}

For the Euler-Poisson equations (\ref{Euler-Poissonnew}) with free boundaries,
it is possible to establish the existence of the solutions outside the bounded
domain $\left\Vert X(t)\right\Vert \leq V_{\sup}$ after the "blowup" time $T$
in Theorem \ref{thm:1}. Therefore, we have the following corollary:

\begin{corollary}
For the $N$-dimensional Euler-Poisson equations (\ref{Euler-Poissonnew}),
consider the non-trivial global $C^{2}$ solutions with $\rho(0,x)$ and
$u(0,x)$, which lie inside a bounded domain: $\left\Vert X(0)\right\Vert \leq
V_{0}$, where $\left\Vert \cdot\right\Vert $ denotes the volume and $V_{0}$ is
a positive constant. We define the weighted functional%
\begin{equation}
H(t)=\int_{X(t)}\operatorname{div}ud\mu_{t},
\end{equation}
with the positive measure $d\mu_{t}=\rho(t,x(t))dx(t)$. If the initial
condition
\begin{equation}
\Omega_{0ij}(x)=\frac{1}{2}\left[  \partial_{i}u^{j}(0,x)-\partial_{j}u_{{}%
}^{i}(0,x)\right]  =0
\end{equation}
and any one of the following conditions,\newline(1) $\Lambda<M/V_{0}$%
,\newline(2)\ $\Lambda\geq M/V_{0}$ and $H(0)<-\sqrt{-\frac{M^{3}N}{V_{0}%
}+\Lambda M^{2}N}$,\newline with the total mass $M=\int_{X(0)}\rho(0,x)dx>0$
are satisfied, $\left\Vert X(t)\right\Vert $ cannot be bounded by the constant
$V_{0}$ for all time $t$.
\end{corollary}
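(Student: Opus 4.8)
The plan is to obtain this corollary from Theorem \ref{thm:1} by a short argument by contradiction, using the fact that a genuinely global $C^{2}$ solution cannot undergo the finite-time blowup guaranteed by the theorem. I would begin by supposing the contrary of the conclusion: that the material domain satisfies $\left\Vert X(t)\right\Vert \leq V_{0}$ for every $t\geq 0$. Under this supposition the solution has a bounded domain with the uniform bound $V_{\sup}=V_{0}$ valid for all time, which is exactly the bounded-domain hypothesis of Theorem \ref{thm:1}.

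The next step is to check that the remaining hypotheses transfer without change. The initial condition $\Omega_{0ij}(x)=0$ is assumed in both statements, and the total mass $M=\int_{X(0)}\rho(0,x)dx>0$ is the same quantity; by the mass equation (\ref{Euler-Poissonnew})$_{1}$ this mass is conserved, so $M=\int_{X(t)}\rho\,dx$ is constant along the flow and agrees with the $M$ entering conditions (1) and (2). Replacing $V_{\sup}$ by $V_{0}$, each alternative (1) $\Lambda<M/V_{0}$ and (2) $\Lambda\geq M/V_{0}$ with $H(0)<-\sqrt{-M^{3}N/V_{0}+\Lambda M^{2}N}$ coincides with the matching hypothesis of Theorem \ref{thm:1}. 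Consequently Theorem \ref{thm:1} applies and forces the non-trivial solution to blow up at some finite time $T$, the mechanism being the blowup of $H(t)$ to $-\infty$ dictated by the Riccati inequality (\ref{ineq11new}).

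This is the point where the contradiction is sealed. By hypothesis the solution is global and $C^{2}$, so it exists and stays twice continuously differentiable for all $t\geq 0$; in particular $\operatorname{div}u$ and $\rho$ remain finite and $H(t)=\int_{X(t)}\operatorname{div}u\,d\mu_{t}$ is a finite integral of continuous functions over the bounded region $X(t)$ of finite mass, hence finite at every finite time, including $t=T$. This directly contradicts the finite-time blowup just derived. Therefore the supposition $\left\Vert X(t)\right\Vert \leq V_{0}$ for all $t$ must fail, i.e. $\left\Vert X(t)\right\Vert$ cannot be bounded by $V_{0}$ for all time, which is the assertion.

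I expect the only delicate point to be the rigorous phrasing of this last contradiction: one must make explicit that it is precisely the standing bound $\left\Vert X(t)\right\Vert \leq V_{0}$ that licenses the estimate $-\int_{X(t)}\rho^{2}dx \leq -M^{2}/V_{0}$ closing (\ref{ineq11new}), so that allowing the domain to expand beyond $V_{0}$ is exactly the loophole through which global existence can survive. The corollary then reads as the clean dichotomy: either the solution breaks down in finite time, or its material domain must eventually exceed the volume $V_{0}$.
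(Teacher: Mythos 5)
Your proposal is correct and is essentially the argument the paper intends: the paper states this corollary as an immediate contrapositive of Theorem \ref{thm:1} (prefaced only by the remark that solutions may persist by escaping the bounded domain), and your proof by contradiction --- assuming $\left\Vert X(t)\right\Vert \leq V_{0}$ for all $t$, invoking Theorem \ref{thm:1} with $V_{\sup}=V_{0}$, and noting that finite-time blowup of $H$ is incompatible with a global $C^{2}$ solution of finite mass on a bounded domain --- is precisely the formalization of that one-line justification. Your closing observation, that the bound $\left\Vert X(t)\right\Vert \leq V_{0}$ is exactly what licenses the estimate $-\int_{X(t)}\rho^{2}dx\leq-M^{2}/V_{0}$ and hence that domain expansion is the only escape route, correctly identifies the point the paper is making.
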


\section{Conclusions}

In this article, we study the life-span problem of self-gravitational fluids
with zero pressure (dust solutions) with a cosmological constant $\Lambda$ and
a bounded domain $X(t)$. We apply a new spectral-dynamics-integration method
to show that there are blowup phenomena if either the cosmological constant is
sufficiently small compared with other parameters of the pressureless
Euler-Poisson system (\ref{Euler-Poissonnew}), or if the weighted functional%
\begin{equation}
H(t)=\int_{X(t)}\rho\operatorname{div}udx
\end{equation}
is initially contracting sufficiently fast.

New functional techniques are expected to investigate the possibility of the
corresponding blowup phenomena for the Euler-Poisson equations with the
pressure term:
\begin{equation}
\left\{
\begin{array}
[c]{rl}%
{\normalsize \rho}_{t}{\normalsize +\nabla\cdot(\rho u)} & {\normalsize =}%
{\normalsize 0}\\
\rho\lbrack u_{t}+(u\cdot\nabla)u]+K\nabla\rho^{\gamma} & {\normalsize =-}%
{\normalsize \rho\nabla\Phi}\\
{\normalsize \Delta\Phi(t,x)} & {\normalsize =}{\normalsize \rho-\Lambda,}%
\end{array}
\right.
\end{equation}
with constants $K>0$ and $\gamma\geq1$.

\section{Acknowledgement}

The author thanks the reviewers for their helpful comments for improving the
quality of this article. This work is partially supported by the Dean's
Research Fund FLASS/ECR-9 of the Hong Kong Institute of Education.

\end{document}